\newtheorem{theorem}{Theorem}
\newtheorem{lemma}[theorem]{Lemma}
\newtheorem{corollary}[theorem]{Corollary}
\DeclareMathOperator*{\argmin}{\arg\!\min}
\title{The Power and Limitations of Static Binary Search Trees with Lazy Finger}
\date{}
\author{
Prosenjit Bose\thanks{Carleton University, Ottawa, Ontario, Canada.} 
\and Karim Dou\"{i}eb\thanks{Brussels, Belgium.}
\and John Iacono\thanks{Polytechnic Institute of New York University, Brooklyn, New York, USA. Research supported by NSF Grant CCF-1018370.}
\and Stefan Langerman\thanks{Universit\'e Libre de Bruxelles, Brussels, Belgium. Ma\^{i}tre de Recherches du F.R.S.-FNRS.}
}
\begin{document}

\maketitle

\begin{abstract}
A static binary search tree where every search starts from where the previous one ends (\emph{lazy finger}) is considered. Such a search method is more powerful than that of the classic optimal static trees, where every search starts from the root (\emph{root finger}), and less powerful than when rotations are allowed---where finding the best rotation based tree is the topic of the dynamic optimality conjecture of Sleator and Tarjan. The runtime of the classic root-finger tree  can be expressed in terms of the entropy of the 
distribution of the searches, but we show that this is not the case for the optimal lazy finger tree. A non-entropy based asymptotically-tight expression for the runtime of the optimal lazy finger trees is derived, and a dynamic programming-based method is presented to compute the optimal tree.
\end{abstract}

\section{Introduction}

\subsection{Static trees}

A binary search tree is one of the most fundamental data structures in computer science. In response to a search operation, some binary trees perform changes in the data structure, while others do not. For example, the splay tree \cite{DBLP:journals/jacm/SleatorT85} data structure performs a sequence of searches that moves the searched item to the root. Other binary search tree data structures do not change at all during a search, for example, red-black trees \cite{DBLP:conf/focs/GuibasS78} and AVL trees \cite{tree-adelson-62}. We will call BSTs that do not perform changes in the structure during searches to be \emph{static} and call trees that perform changes \emph{BSTs with rotations}. In this work we do not consider insertions and deletions, only searches, and thus can assume without loss of generality that all structures under consideration are storing the integers from 1 to $n$ and that all searches are to these items.

We consider two variants of static BSTs: root finger and lazy finger. In the classic method, the \emph{root finger} method, the first search proceeds from the root to the item being searched. In the second and subsequent searches, a root finger BST executes the searches in the same manner, always starting each search from the root. In contrast, here we consider \emph{lazy finger} BSTs to be those which start each search at the destination of the previous search and move to the item being searched. In general, this movement involves going up to the least common ancestor (LCA) of the previous and current items being searched, and then moving down from the LCA to the current item being searched.

\subsection{Notation and definitions}

A static tree $T$ is a fixed binary search tree containing $n$ elements. No rotations are allowed. The data structure must process a sequence of searches, by moving a single pointer in the tree. 
Let $r(T,i,j)$ be the time to move the pointer in the tree $T$ from node $i$ to $j$. If $d_T(i)$ represents the depth of node $i$, with the root defined as having depth zero, then 
\begin{align*}
r(T,i,j)&=d_T(i)-d_T(\text{LCA}_T(i,j)) + d_T(j)- d_T(\text{LCA}_T(i,j))\\&=d_T(i)+d_T(j)-2d_T(\text{LCA}_T(i,j)).
\end{align*}

The runtime to execute a sequence $X=x_1,x_2, \ldots x_m$ of searches on a tree $T$ using the root finger method is
%\begin{equation} 
$$
R_{r}(T,X) = \sum_{i=1}^m r(T,root(T),x_i)=\sum_{i=1}^m d_T(x_i)
$$
%\end{equation} 
%\label{eq:roottime}
and the runtime to execute the same sequence on a tree $T$ using the lazy finger method is
$$ R_{\ell}(T,X) = \sum_{i=1}^m r(T,x_{i-1},x_i)
=\left(2\sum_{i=1}^m (d_T(x_i)-d_T(\text{LCA}_T(x_i,x_{i-1})))\right)-d_T(x_m)$$
where $x_0$ is defined to be the root of $T$, which is where the first search starts.

\subsection{History of optimal static trees with root finger}

For the root finger method, once the tree $T$ is fixed, the cost of any single search in tree $T$ depends only on the search and the tree, not on any of the search history. Thus, the optimal search tree for the root finger method is a function only of the frequency of the searches for each item.
Let $f_X(a)$ denote the number of searches in $X$ to $a$. Given $f_X$, computing the optimal static BST with root finger has a long history.
In 1971, Knuth gave a $O(n^2)$ dynamic programming solution that finds the optimum tree \cite{DBLP:journals/acta/Knuth71}.
More interestingly is the discovery of a connection between the runtime of the optimal tree and the entropy of the frequencies:

$$H(f_X) = \sum_{a=1}^{n} \frac{f_X(a)}{m}\lg \frac{m}{f_X(a)}$$.

Melhorn \cite{DBLP:journals/acta/Mehlhorn75}
showed that a simple greedy heuristic proposed by Knuth
\cite{DBLP:journals/acta/Knuth71}
and shown to have a linear-time implementation by Fredman
 \cite{DBLP:conf/stoc/Fredman75}
 produced a static tree where an average search took time
 $2+\frac{1}{1-\lg (\sqrt{5}-1)}H(f_X)$. Furthermore, Melhorn demonstrated
 a lower bound of $\frac{1}{\lg 3} H(f_X)$ for an average search in an optimal static tree, and showed this bound was tight for infinitely many distributions. Thus, by 1975, it was established that the runtime for an average search in an optimal search tree with root finger was $O(H(f_X))$, and that such a tree could easily be computed in linear time.

\subsection{Our results}

We wish to study the natural problem of what we have coined search with a lazy finger in a static tree, i.e. have each search start where the last one ended. We seek to characterize the optimal tree for this search strategy, and describe how to build it.

The lazy finger method is asymptotically clearly no worse then the root finger method; moving up to the LCA and back down is better than moving to the root and back down, which is exactly double the cost of the root finger method. But, in general, is the lazy finger method better?
For the lazy finger method, the cost of a single search in a static tree depends only on the current search and the previous search---this puts lazy finger's runtime dependence on the search sequence between that of root finger and trees with rotations. Thus the optimal search tree for the lazy finger method only depends on the frequency of each search transition; let $f_X(a,b)$ be the number of searches in $X$ to $b$ where the previous search was to $a$. Given these pairwise frequencies (from which the frequencies $f_X(a)$ can easily be computed), is there a nice closed form for the runtime of the optimal BST with lazy finger? One natural runtime to consider is the conditional entropy:

$$H_c(f_X) = \sum_{a=1}^{n} \sum_{b=1}^{n} \frac{f_X({a,b})}{m}\lg \frac{f_X(a)}{f_X({a,b})}$$

This is of interest as information theory gives this as an expected lower bound\footnote{When multiplied by $\frac{1}{\lg 3}$, as the information theory lower bound holds for binary decisions and as observed in \cite{DBLP:journals/acta/Mehlhorn75} needs to be adjusted to the ternary decisions that occur at each node when traversing a BST.} if the search sequence is derived from a Markov chain where $n$ states represents searching each item.

While a runtime related to the conditional entropy is the best achievable by any algorithm parameterized solely on the pairwise frequencies, however, we will show in Lemma~\ref{l:conditionalentropysuck} that the conditional entropy is impossible to be asymptotically achieved for any BST, static or dynamic, within any $o(\log n)$ factor. Thus, for the root finger, the lower bound given  by information theory is achievable, for lazy finger it is not related in any minimal way with the runtime of the optimal tree. In Section~\ref{s:multitree} we will present a simple static non-tree structure whose runtime is related to the conditional entropy.

\newcommand{\sumw}[3]{\sum_{#1=\min(#2,{#3})}^{\max({#2},{#3})} w_{#1}}
\newcommand{\sumwt}[3]{\sum_{#1=\min(#2,{#3})}^{\max({#2},{#3})} w^T_{#1}}

This still leaves us with the question: is there a simple closed form for the runtime of the optimal BST with lazy finger? We answer this in the affirmative by showing an equivalence between the runtime of BSTs with lazy finger and something known as the weighted dynamic finger runtime. In the weighted dynamic finger runtime, if item $i$ is given weight $w_i$, then the time to execute search $x_i$ is
$$\lg  \frac{ \sumw{k}{x_i}{x_{i-1}}}{\min(w_{x_{i-1}},w_{x_i})}.$$
Our main theorem is that the runtime of the best static tree with lazy finger, $LF(X)$, is given by the weighted dynamic finger runtime bound with the best choice of weights:
$$LF(X)=
\min_T R_\ell(T,X) =
\Theta \left(\min_W \left\{ \sum_{i=1}^m \lg  \frac{\displaystyle \sumw{k}{x_i}{x_{i-1}}}{\min(w_{x_{i-1}},w_{x_i})} \right\} \right)
$$

To prove this, we first state the result of Seidel and Aragon \cite{DBLP:journals/algorithmica/SeidelA96} in Section~\ref{s:weightsgivetrees} of how to construct a tree with the weighted dynamic finger runtime given a set of weights. Then, in Section~\ref{s:trresgiveweights}, we show how, given any static tree $T$, there exists weights such the runtime of $T$ on a sequence using lazy finger can be lower bounded using the weighted dynamic finger runtime with these weights. These results are combined in Section~\ref{s:main} to give the main theorem.

While a nice closed-form formula for the runtime of splay trees is not known, there are several different bounds on their runtime: working set, static finger, dynamic finger, and static optimality \cite{DBLP:journals/jacm/SleatorT85,DBLP:journals/siamcomp/ColeMSS00,DBLP:journals/siamcomp/Cole00}. One implication of our result is that the runtime of the optimal lazy finger tree is asymptotically as good as that of all of the aforementioned bounds with the exception of the working set bound (see Theorem~\ref{th:limit} for why the working set bound does not hold on a lazy finger static structure).

However, while these results have served to characterize the best runtime for the optimal BST, a concrete method is needed to compute the best tree given the pairwise frequencies. We present a dynamic programming solution in Section~\ref{s:dynprog}; this solution takes time $O(n^3)$ to compute the optimal tree for lazy finger, given a table of size $n^2$ with the frequency of each pair of searches occurring adjacently. 
This method could be extended using the ideas of \cite{DBLP:journals/ijcga/IaconoM12} into one which periodically rebuilds the static structure using the observed frequencies so far; the result would be an online structure that for sufficiently long search sequences achieves a runtime that is within a constant factor of the optimal tree without needing to be initialized with the pairwise frequencies.
%The linear-time heuristic of Knuth for the root finger problem was to recursively choose the root so that the search frequencies of the left and right subtrees were as equal as possible, however we have been unable to discover a simple heuristic that works well for lazy finger given the pairwise search frequencies.

%\section{Statement of Theorem}
%
%$$
%\min_T \left( \sum_{i=1}^m r(T,x_{i-1},x_i) \right) =
%\Theta \left(\min_W \left\{ \sum_{i=1}^m \lg  \frac{ \sum_{k=x_{i-1}}^{x_i} w_k}{\min(w_{x_{i-1}},w_{x_i})} \right\} \right)
%$$

\subsection{Why static trees?}

Static trees are less powerful than dynamic ones in terms of the classes of search sequence distributions that can be executed quickly, so why are we studying them? Here we list a few reasons:

\begin{itemize}
\item Rotation-based trees have horrible cache performance. However, there are methods to map the nodes of a static tree to memory so as to have optimal performance in the disk-access model and cache-oblivious models of the memory hierarchy \cite{DBLP:journals/ipl/Boas77,DBLP:journals/corr/cs-DS-0410048,DBLP:journals/jal/GilI99,DBLP:conf/soda/ClarkM96}. One leading cache oblivious predecessor query data structure that supports insertion and deletion works by having a static tree and moves the data around in the fixed static tree in response to insertions and deletions and only periodically rebuilds the static structure \cite{DBLP:journals/jal/BenderDIW04}---in such a structure an efficient static structure is the key to obtaining good performance even with insertions and deletions.

\item One should use the simplest structure with the least overhead that gets the job done. By completely categorizing the runtime of the optimal tree with lazy finger, one can know if such a structure is appropriate for a particular application or whether one should instead use the more powerful dynamic trees, or simpler root-finger trees.

\item Concurrency becomes a real issue in dynamic trees, which requires another layer of complexity to resolve (see, for example \cite{DBLP:conf/ppopp/BronsonCCO10}), while static trees trivially support 
concurrent operations. Moreover, if several search sequences from several sources are interleaved, any dependence the pervious operation is destroyed by the interleaving. However, it is easy to have each sequence have it own lazy finger into a static tree, that allows each search sequence to be executed concurrently while not losing the ability to take advantage of any distributional temporal cohesion in each search sequence source.

\end{itemize}

\section{Weights give a tree} \label{s:weightsgivetrees}

\begin{theorem}[Seidel and Aragon \cite{DBLP:conf/focs/AragonS89}]
Given a set of weights $W=w_1, w_2, \ldots w_n$,  there is a randomized method to choose a tree $T_W$ such that the expected runtime is

$$ r(T_W,i,j) = O\left(\lg  \frac{ \sumw{k}{i}{j}}{\min(w_i,w_j)} \right) $$
\end{theorem}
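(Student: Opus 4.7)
The plan is to construct $T_W$ via the randomized weighted treap. I would assign each key $i$ an independent random priority $p_i$ with distribution $\Pr[p_i \le t] = t^{w_i}$ on $[0,1]$; when $w_i$ is a positive integer this is the law of the maximum of $w_i$ i.i.d.\ $U[0,1]$ variates, and it extends naturally to arbitrary positive real weights. Then $T_W$ is the unique BST on $\{1,\ldots,n\}$ that is simultaneously a max-heap on the priorities. Ties occur with probability zero so $T_W$ is almost surely well defined.

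Fix $i \le j$ without loss of generality. Every node on the search path from $i$ to $j$ in any BST has key in $[i,j]$ and is either an ancestor of $i$, an ancestor of $j$, or both (the LCA), so $r(T_W,i,j) \le A_i + A_j$, where $A_i$ (resp.\ $A_j$) counts ancestors of $i$ (resp.\ $j$) with key in $[i,j]$. The standard treap observation is that for $k \in [i,j]$, the node $k$ is an ancestor of $i$ in $T_W$ iff $p_k$ is the maximum among $\{p_l : l \in [i,k]\}$. Under the chosen priority distribution this event has probability exactly $w_k / \sum_{l=i}^{k} w_l$, and symmetrically $k$ is an ancestor of $j$ with probability $w_k / \sum_{l=k}^{j} w_l$.

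Taking expectations and summing over $k \in [i,j]$,
\[
\mathbb{E}[r(T_W,i,j)] \;\le\; \sum_{k=i}^{j} \frac{w_k}{\sum_{l=i}^{k} w_l} \;+\; \sum_{k=i}^{j} \frac{w_k}{\sum_{l=k}^{j} w_l}.
\]
Each sum is bounded by a telescoping argument using $(x-1)/x \le \ln x$: writing $S_k = \sum_{l=i}^{k} w_l$, the $k$th term of the first sum equals $1 - S_{k-1}/S_k \le \ln(S_k/S_{k-1})$ for $k > i$, while the $k=i$ term contributes $1$, so the whole first sum is at most $1 + \ln(\sumw{k}{i}{j}/w_i)$. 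The second sum is symmetric, bounded by $1 + \ln(\sumw{k}{i}{j}/w_j)$. Adding the two yields the claimed $O\!\left(\lg \tfrac{\sumw{k}{i}{j}}{\min(w_i,w_j)}\right)$ bound.

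The main obstacle I expect is setting up the priority distribution so that the identity $\Pr[k\text{ has max priority in }[i,k]] = w_k/\sum_{l=i}^{k} w_l$ holds exactly; the max-of-uniforms construction (or an equivalent continuous version for non-integer $w_i$) is what makes this clean. Once that is in place, the rest reduces to the well-known treap path-length analysis: the ancestor characterization plus the one-line telescoping bound on the harmonic-like sum, with care that the LCA gets double-counted only in a way that costs at most an additive constant, which is absorbed into the $O(\cdot)$.
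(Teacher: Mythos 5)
Your overall strategy---the weighted treap with priority law $\Pr[p_i\le t]=t^{w_i}$, the ancestor characterization, and the telescoping bound on the resulting harmonic-like sums---is precisely the construction behind the cited result; the paper itself offers no proof, only the citation to Seidel and Aragon together with the equivalent ``pick the root with probability proportional to total weight'' description, and your computation $\Pr[p_k \text{ is max in } [i,k]]=w_k/\sum_{l=i}^{k}w_l$ as well as the bound $\sum_{k=i}^{j} w_k/S_k \le 1+\ln(S_j/w_i)$ are both correct.

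However, there is a genuine gap at your very first step: the claim that every node on the search path from $i$ to $j$ has key in $[i,j]$ is false, so the pointwise inequality $r(T_W,i,j)\le A_i+A_j$ does not hold. Concretely, take keys $\{1,3,5,7\}$ with root $5$, left child $1$, right child $7$, and $3$ the right child of $1$: the path from $3$ to $7$ is $3\to 1\to 5\to 7$ and visits $1\notin[3,7]$. In general the climb from $i$ to the LCA may pass through ancestors of $i$ with keys \emph{smaller} than $i$ (symmetrically, ancestors of $j$ with keys larger than $j$), and in a worst-case BST the number of such nodes is unbounded, so the inequality must be justified probabilistically, not deterministically. The standard repair: in a treap the LCA of $i$ and $j$ is the maximum-priority node in $[i,j]$, and a node $a<i$ lies on the path iff $p_a$ is maximal in $[a,i]$ but not maximal in $[a,j]$, whence
\[
\Pr[a \text{ is on the path}] \;=\; \frac{w_a}{\sum_{l=a}^{i} w_l} \;-\; \frac{w_a}{\sum_{l=a}^{j} w_l},
\]
and summing over all $a<i$ (for instance by comparison with $\int_{w_i}^{\infty} \frac{B\,dA}{A(A+B)}$, where $B=\sum_{l=i+1}^{j} w_l$, which evaluates to $\ln\bigl(\sum_{l=i}^{j} w_l / w_i\bigr)$) bounds the expected number of these excess nodes by a quantity of the same order as your main term; the $j$ side is symmetric. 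With that additional estimate your argument goes through and the double-counting of the LCA is indeed only an additive constant; as written, though, the proof is incomplete because it silently drops the path nodes outside $[i,j]$.
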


The method to randomly create $T_W$ is a straightforward random tree construction using the weights: recursively pick the root using the normalized weights of all nodes as probabilities.
Thus, by the probabilistic method \cite{DBLP:books/wi/AlonS92}, there is a deterministic tree, call it $T_{\text{minw}}(X)$ whose runtime over the sequence $X$ is 
at most the runtime bound of Seidel and Agraon for the sequence $X$ on the best possible choice of weights.

\begin{corollary}\label{c:tree}

There is a tree $T_{\text{\text{minw}}}(X)$ such that
$$
\sum_{i=1}^m r(T_{\text{\text{minw}}}(X),x_{i-1},x_i) =
O \left(\min_W \left\{ \sum_{i=1}^m \lg  \frac{ \sumw{k}{x_i}{x_{i-1}}}{\min(w_{x_{i-1}},w_{x_i})} \right\} \right)
$$
\end{corollary}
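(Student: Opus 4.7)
The plan is to apply the Seidel--Aragon theorem with the weight vector that minimizes the right-hand side, and then derandomize via the probabilistic method. Since the tree $T_{\text{minw}}(X)$ is allowed to depend on the whole sequence $X$, we are free to tailor the weights to $X$.

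Concretely, fix $X$ and let $W^\ast=(w^\ast_1,\dots,w^\ast_n)$ be a weight vector (with all $w^\ast_i>0$) attaining the infimum of $\sum_{i=1}^m \lg \frac{\sumw{k}{x_i}{x_{i-1}}}{\min(w_{x_{i-1}},w_{x_i})}$; such a minimizer can be taken over a compact set of normalized positive weights since the expression is continuous and bounded below. Construct the random tree $T_{W^\ast}$ by the recursive weighted-root procedure of Seidel and Aragon. Their theorem gives, for each transition $(x_{i-1},x_i)$,
\[
\mathbb{E}\bigl[r(T_{W^\ast},x_{i-1},x_i)\bigr] = O\!\left(\lg \frac{\sumw{k}{x_i}{x_{i-1}}}{\min(w^\ast_{x_{i-1}},w^\ast_{x_i})}\right),
\]
where the expectation is over the random construction of $T_{W^\ast}$.

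Summing over $i=1,\dots,m$ and applying linearity of expectation yields
\[
\mathbb{E}\!\left[\sum_{i=1}^m r(T_{W^\ast},x_{i-1},x_i)\right] = O\!\left(\sum_{i=1}^m \lg \frac{\sumw{k}{x_i}{x_{i-1}}}{\min(w^\ast_{x_{i-1}},w^\ast_{x_i})}\right),
\]
and the probabilistic method produces at least one realization of $T_{W^\ast}$ whose actual total cost does not exceed this expectation. Define $T_{\text{minw}}(X)$ to be such a realization. Because $W^\ast$ was chosen to minimize the bracketed expression, the resulting bound is exactly the claimed $O(\min_W \cdots)$.

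There is no real obstacle: the only point worth articulating is that all $m$ query costs share the single random tree $T_{W^\ast}$ and therefore are correlated, but linearity of expectation holds without any independence assumption, so the per-query Seidel--Aragon bound sums cleanly. Everything else is routine bookkeeping of constants absorbed into the $O(\cdot)$.
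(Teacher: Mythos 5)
Your proposal is correct and matches the paper's argument: the paper likewise takes the best choice of weights in the Seidel--Aragon theorem, sums the per-transition expected bound over the sequence, and invokes the probabilistic method to extract a deterministic tree $T_{\text{minw}}(X)$ whose total cost is at most the expectation. Your write-up merely makes explicit the routine details (existence of a minimizing weight vector, linearity of expectation despite correlation across queries) that the paper leaves implicit.
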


\begin{proof}
This follows directly from Seidel and Aragon, where $T_{\text{\text{minw}}}(X)$ is a tree that achieves the expected runtime of their randomized method for the best choice of weights.
\end{proof}

\section{Trees can be represented by weights}\label{s:trresgiveweights}

\begin{lemma}
For all trees $T$ there is a set of weights $W^T=w_1^T,w_2^T,\ldots w_n^T$ such that for all $i,j$

$$r(T,i,j) \leq \left(\lg  \frac{ \sumwt{k}{i}{j}}{\min(w_i^T,w_j^T)} \right) $$
\end{lemma}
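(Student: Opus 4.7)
The plan is to exhibit the weights explicitly as $w_k^T := 4^{-d_T(k)}$, i.e.\ $2^{-2 d_T(k)}$. The main idea is that, with this choice, the entire cost $r(T,i,j) = d_T(i) + d_T(j) - 2 d_T(\mathrm{LCA}_T(i,j))$ can be charged against a single summand on the right-hand side, namely the one corresponding to the LCA. No structural reasoning about spines, subtree sizes, or the rest of $T$ is needed.

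The structural input on the sum side is just the BST property. Writing $L := \mathrm{LCA}_T(i,j)$, the key of $L$ lies in $[\min(i,j),\max(i,j)]$, so $w_L^T = 4^{-d_T(L)}$ appears as one of the terms in $\sumwt{k}{i}{j}$. This yields the one-term lower bound $\sumwt{k}{i}{j} \geq 4^{-d_T(L)}$ with essentially no work.

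The rest is a couple of lines of bookkeeping. Since $\min(w_i^T,w_j^T) = 4^{-\max(d_T(i),d_T(j))}$, the logarithm of the ratio on the right is at least $2\bigl(\max(d_T(i),d_T(j)) - d_T(L)\bigr)$. Since $L$ is an ancestor of both $i$ and $j$, we have $d_T(L) \leq \min(d_T(i),d_T(j))$, and hence
\[
r(T,i,j) \;=\; \bigl(\max(d_T(i),d_T(j)) - d_T(L)\bigr) + \bigl(\min(d_T(i),d_T(j)) - d_T(L)\bigr) \;\leq\; 2\bigl(\max(d_T(i),d_T(j)) - d_T(L)\bigr),
\]
chaining the inequalities closes the argument.

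The only conceptual obstacle is discovering that the base of the exponent has to be $4$, not $2$. With the naive choice $w_k^T = 2^{-d_T(k)}$ one gets only $\lg \sumwt{k}{i}{j} \geq -d_T(L)$ from the LCA term, which pays for one of the two legs of the $r(T,i,j)$-path but not both; trying to recover the missing leg from spine or subtree contributions adds only $O(1)$ to the logarithm, which is useless. Doubling the exponent — equivalently, squaring the weights — provides exactly the extra factor of two in $\min(w_i^T,w_j^T)$ that is needed, after which the trivial single-term bound on the sum already suffices.
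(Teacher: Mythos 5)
Your proof is correct and takes essentially the same route as the paper's: the identical weights $w_k^T = 4^{-d_T(k)}$, the identical one-term lower bound $\sumwt{k}{i}{j} \ge 4^{-d_T(\mathrm{LCA}_T(i,j))}$ justified by the BST property that the LCA's key lies in $[\min(i,j),\max(i,j)]$, and the identical arithmetic giving $r(T,i,j) \le 2\bigl(\max(d_T(i),d_T(j)) - d_T(\mathrm{LCA}_T(i,j))\bigr) \le \lg \frac{\sumwt{k}{i}{j}}{\min(w_i^T,w_j^T)}$. If anything, your bookkeeping is slightly cleaner, since the paper's final displayed line contains a typo (it writes $d_T(i)+d_T(j)-2d_T(\mathrm{LCA}_T(i,j)) = 2r(T,i,j)$, whereas by the paper's own definition this quantity equals $r(T,i,j)$).
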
 \label{treeweight}

\begin{proof}
These weights are simple: give a node at depth $d$ in $T$ a weight of $\frac{1}{4^d}$.
%
%%DONT SEEM TO NEED
%Call the subtree weight of node $i$, $s(i)$ to be the sum of the weights of $i$ and all nodes in $i$'s subtree. Observe that if a node $i$ is at depth $d$, the its subtree weight is less than its subtree weight if it has a full infinite subtree. Thus,
%$$ s(i) < \sum_{i=d}^{\infty} \frac{2^{i-d}}{4^i}=\frac{2}{4^d}.$$
%So therefore a node of depth $d$ has a subtree weight from $\frac{1}{4^{d}}$ to $\frac{2}{4^d}$.
%
%
Consider a search that starts at node $i$ and goes to node $j$. Such a path goes up from $i$ to $\text{LCA}_T(i,j)$ and down to $j$. A lower bound on $\sumwt{k}{i}{j}$ 
%is the sum of the weights of the subtree rooted at the LCA of $i$ and $j$, which we know is 
%at most $\frac{2}{4^{d(\text{LCA}_T(x,y))}}$.
is the weight of $\text{LCA}_T(i,j)$ which is included in this sum and is $\frac{1}{4^{d_T(\text{LCA}_T(i,j))}}$. Thus we can bound $\lg  \frac{ \sumwt{k}{i}{j}}{\min(w_i^T,w_j^T)}$ as follows:

\begin{align*}
\lg  \frac{ \sumwt{k}{i}{j}}{\min(w_i^T,w_j^T)}
&\geq  \lg  \frac{\frac{1}{4^{d_T(\text{LCA}_T(i,j))}}}{\min \left(\frac{1}{4^{d_T(i)}},\frac{1}{4^{d_T(j)}}\right)}\\
%&=\lg  \frac{\frac{1}{4^{d_T(\text{LCA}_T(i,j))}}}{\frac{1}{4^{\max(d_T(i),d_T(j)}}}\\
%&=\lg  \frac{4^{\max(d_T(i),d_T(j))}}{4^{d_T(\text{LCA}_T(i,j))}}\\
&=2\max(d_T(i),d_T(j))-2d_T(\text{LCA}_T(i,j))\\
%& \geq  d_T(i)+d_T(j)-2d_T(\text{LCA}_T(i,j))\\
& \geq  d_T(i)+d_T(j)-2d_T(\text{LCA}_T(i,j))\\
& =2r(T,i,j)
\end{align*}

\end{proof}

\begin{corollary} \label{c:weights} For any tree $T$
$$\sum_{i=1}^m r(T,x_{i-1},x_i)  =
\Omega \left(\min_W \left\{ \sum_{i=1}^m \lg  \frac{ \sumw{k}{x_i}{x_{i-1}}}{\min(w_{x_{i-1}},w_{x_i})} \right\} \right)$$.
\end{corollary}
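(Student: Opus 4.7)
The plan is immediate from the lemma: I would apply its per-pair inequality to each consecutive search transition $(x_{i-1}, x_i)$ and then sum. For the fixed tree $T$, instantiate the depth-based weights $W^T$ furnished by the lemma, and observe that, term by term, the lemma controls $\lg \frac{\sumwt{k}{x_{i-1}}{x_i}}{\min(w^T_{x_{i-1}}, w^T_{x_i})}$ in terms of $r(T, x_{i-1}, x_i)$ up to a universal constant. Summing over $i = 1, 2, \ldots, m$ compares $\sum_i r(T, x_{i-1}, x_i)$ against the weighted dynamic finger expression evaluated at the single weight sequence $W^T$.

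The second step is the trivial observation that, because $W^T$ is only one particular admissible weighting,
\[ \min_W \sum_{i=1}^m \lg \frac{\sumw{k}{x_{i-1}}{x_i}}{\min(w_{x_{i-1}}, w_{x_i})} \;\leq\; \sum_{i=1}^m \lg \frac{\sumwt{k}{x_{i-1}}{x_i}}{\min(w^T_{x_{i-1}}, w^T_{x_i})}. \]
Chaining this with the summed per-pair bound from the lemma delivers the conclusion $\sum_i r(T, x_{i-1}, x_i) = \Omega(\min_W \sum_i \lg(\cdot))$ of the corollary.

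I expect no substantive obstacle in this corollary: the genuine technical work was carried by the preceding lemma, and what remains is summation together with the inequality $\min_W f(W) \leq f(W^T)$. The only care required is to track the constant factors so they are absorbed into the $\Omega(\cdot)$, and to ensure that the direction of the per-pair bound in the lemma is the one that places the weighted dynamic finger value at $W^T$ below a constant multiple of the tree's actual movement cost $r(T,i,j)$, so that $\min_W$ of the WDF expression ends up bounded above by (a constant times) the runtime rather than the other way around.
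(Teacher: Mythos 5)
Your proposal is, in structure, exactly the paper's proof: instantiate the depth-based weights $W^T$ from the lemma, sum the per-pair bound over $i=1,\ldots,m$, and finish with the trivial observation that $\min_W$ of the weighted dynamic finger expression is at most its value at the particular weighting $W^T$.

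The one point you flagged as requiring care---the direction of the per-pair bound---is in fact the substantive issue, and it lies in the paper, not in your argument. As printed, Lemma~\ref{treeweight} asserts $r(T,i,j) \leq \lg\bigl(\sumwt{k}{i}{j}/\min(w^T_i,w^T_j)\bigr)$, and its proof (lower-bounding the weight sum by the single term for $\text{LCA}_T(i,j)$) establishes exactly that direction. That inequality is useless for this corollary: it upper-bounds both $\sum_i r(T,x_{i-1},x_i)$ and $\min_W\{\cdot\}$ by the same quantity, yielding no relation between them. What the corollary consumes---and what your proof correctly demands---is the reverse bound $\lg\bigl(\sumwt{k}{i}{j}/\min(w^T_i,w^T_j)\bigr) \leq O(r(T,i,j))$ at the weights $W^T$. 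This does hold for $w^T_k = 4^{-d_T(k)}$, but needs the opposite estimate on the weight sum: every key in $[\min(i,j),\max(i,j)]$ lies in the subtree rooted at $\text{LCA}_T(i,j)$, and at depth $d_T(\text{LCA}_T(i,j))+t$ there are at most $2^t$ such nodes, so $\sumwt{k}{i}{j} \leq \sum_{t\geq 0} 2^t 4^{-d_T(\text{LCA}_T(i,j))-t} = 2\cdot 4^{-d_T(\text{LCA}_T(i,j))}$; since $\min(w^T_i,w^T_j) = 4^{-\max(d_T(i),d_T(j))}$ and $\max(d_T(i),d_T(j))-d_T(\text{LCA}_T(i,j)) \leq r(T,i,j)$, the log ratio is at most $1 + 2r(T,i,j) \leq 3\,r(T,i,j)$ for $i \neq j$ (and equals $0 = r(T,i,i)$ when $i=j$, so the additive constant is harmless). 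So your reading is the correct one and your two-step skeleton is sound, but to make it self-contained you should supply this geometric-series upper bound yourself rather than cite the lemma's stated inequality verbatim, since the lemma as written points the wrong way.
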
 

\begin{proof}
Follows directly from Lemma~\ref{treeweight},  summing over the access sequence $X$, and noting that replacing the weights $W^T$ with minimum weights can only decrease the right hand side.
\end{proof}

\section{Proof of main theorem}\label{s:main}

Here we combine the results of the previous two sections to show that the runtime of the optimal tree with lazy finger, $LF(T)$, is asymptotically the weighted dynamic finger bound for the best choice of weights.

\begin{theorem}
$$
LF(T)=
\min_T \left( \sum_{i=1}^m r(T,x_{i-1},x_i) \right) =
\Theta \left(\min_W \left\{ \sum_{i=1}^m \lg  \frac{\sumw{k}{x_i}{x_{i-1}}}{\min(w_{x_{i-1}},w_{x_i})} \right\} \right)
$$
\end{theorem}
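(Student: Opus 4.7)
The plan is to simply chain the two results of Sections~\ref{s:weightsgivetrees} and~\ref{s:trresgiveweights}, which were stated precisely to sandwich $\min_T R_\ell(T,X)$ between the weighted dynamic finger bound from above and below.

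For the upper bound, I would invoke Corollary~\ref{c:tree} directly: the tree $T_{\text{minw}}(X)$ produced by the probabilistic method of Seidel and Aragon (with weights chosen optimally) satisfies
$$\sum_{i=1}^m r(T_{\text{minw}}(X), x_{i-1}, x_i) = O\!\left(\min_W \sum_{i=1}^m \lg \frac{\sumw{k}{x_i}{x_{i-1}}}{\min(w_{x_{i-1}}, w_{x_i})}\right).$$
Since $\min_T \sum_{i=1}^m r(T, x_{i-1}, x_i) \le \sum_{i=1}^m r(T_{\text{minw}}(X), x_{i-1}, x_i)$, this gives the $O$ direction of the theorem.

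For the matching lower bound, I would apply Corollary~\ref{c:weights}: for an arbitrary tree $T$, the weights $w_i^T = 4^{-d_T(i)}$ constructed in Lemma~\ref{treeweight} show that
$$\sum_{i=1}^m r(T, x_{i-1}, x_i) = \Omega\!\left(\min_W \sum_{i=1}^m \lg \frac{\sumw{k}{x_i}{x_{i-1}}}{\min(w_{x_{i-1}}, w_{x_i})}\right),$$
where passing from the specific weights $W^T$ to the minimum over all $W$ only decreases the right-hand side. Since this $\Omega$-bound holds uniformly in $T$, it holds for the minimizing tree as well, giving the $\Omega$ direction.

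Combining the two inequalities yields the $\Theta$ conclusion of the theorem. There is essentially no obstacle here, since all the real work was done in the two preceding sections: the randomized construction supplies a tree that meets the weighted bound, and the $4^{-d_T(\cdot)}$ weight assignment shows no tree can do asymptotically better. The only thing worth flagging is that the minimization over $W$ on both sides is taken over the same unconstrained family of positive weight vectors, so nothing is lost in aligning the two bounds.
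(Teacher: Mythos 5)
Your proposal is correct and is essentially identical to the paper's own proof: the paper likewise gets the upper bound from Corollary~\ref{c:tree} combined with the observation that the optimal tree $T_{\min}$ runs no slower than $T_{\text{minw}}(X)$, and the lower bound from Corollary~\ref{c:weights} applied to $T_{\min}$. The only difference is presentational---the paper chains the two directions into a single sequence of $\Omega$-bounds, whereas you state the $O$ and $\Omega$ inequalities separately before combining them.
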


\begin{proof}
We start by defining $T_{\min}(X)$ to be the optimal tree.
Let 
$$T_{\min}=\argmin_T \left\{ \sum_{i=1}^m r(T,x_{i-1},x_i) \right\}.$$
Then, the runtime of $T_{\min}$ is at most the runtime of any other tree, including $T_{\text{minw}}$:  \begin{equation} \sum_{i=1}^m r(T_{\min},x_{i-1},x_i) \leq \sum_{i=1}^m r(T_{\text{minw}},x_{i-1},x_i). \label{eq:mins} \end{equation} 

We now proceed with the main proof. By Corollary~\ref{c:tree}:
\begin{align*}
\min_W \left\{ \sum_{i=1}^m \lg  \frac{ \sumw{k}{x_i}{x_{i-1}}}{\min(w_{x_{i-1}},w_{x_i})} \right\} 
&= \Omega \left( \sum_{i=1}^m r(T_{\text{minw}}(X),x_{i-1},x_i) \right) \\
\intertext{By Equation~\eqref{eq:mins}}
&= \Omega  \left( \sum_{i=1}^m r(T_{\min}(X),x_{i-1},x_i)  \right)\\
\intertext{Using Corollary~\ref{c:weights}:}
&=
\Omega \left(\min_W \left\{ \sum_{i=1}^m \lg  \frac{ \sumw{k}{x_i}{x_{i-1}}}{\min(w_{x_{i-1}},w_{x_i})} \right\} \right)
\end{align*}
\end{proof}

\section{Hierarchy and limitations of models}

In this section we show there is a strict hierarchy of runtimes from the root finger static BST model to the lazy finger static BST model to the rotation-based BST model. Let $OPT(X)$ be the fastest any binary search with rotations  can execute $X$.

\begin{theorem} \label{th:limit}
For any sequence $X$, 
$$\min_T R_r(T,X) = \Omega\left( \min_T R_{\ell}(T,X) \right) =\Omega(OPT(X)). $$
Furthermore there exist classes of search sequences of any length $m$, $X^1_m$ and $X^2_m$ such that
$$\min_T R_r(T,X^1_m) = \omega \left( \min_T R_{\ell}(T,X^1_m) \right)$$
and
$$ \min_T R_{\ell}(T,X^2_m)  =\omega(OPT(X^2_m)).$$
\end{theorem}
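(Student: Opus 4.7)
The plan is to handle the three inequalities and two separations separately. First I would verify the upper-bound chain. For $\min_T R_r(T,X) \ge \tfrac12 \min_T R_\ell(T,X)$: on any tree $T$, the per-step bound $d_T(x_{i-1}) + d_T(x_i) - 2 d_T(\text{LCA}_T(x_{i-1},x_i)) \le d_T(x_{i-1}) + d_T(x_i)$ combined with the fact that each $d_T(x_i)$ appears at most twice on the right gives $R_\ell(T,X) \le 2 R_r(T,X)$, and taking the minimum over $T$ yields the inequality. For $\min_T R_\ell(T,X) \ge \Omega(OPT(X))$: any static tree read with a lazy finger is a legal strategy in the BST-with-rotations model (never rotate, just track the finger's position), so $OPT(X) \le R_\ell(T,X)$ for every $T$.

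For $X^1_m$ I would take the cyclic sequential scan $1,2,\ldots,k,1,2,\ldots,k,\ldots$ of total length $m$, with $k = \min(n,m) \to \infty$. In any balanced static tree the lazy finger traces an in-order cycle whose total cost over one pass is $O(k)$, because each tree edge is traversed at most twice per pass, so $\min_T R_\ell(T,X^1_m) = O(m)$. Mehlhorn's $\Omega(H)$ entropy lower bound for optimal static trees, applied to the uniform distribution on $k$ elements, gives $\min_T R_r(T,X^1_m) = \Omega(m \log k)$, so the ratio is $\omega(1)$.

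For $X^2_m$ I would use phases of ``non-nested'' pairs. Put $L = \lfloor 2m/n \rfloor$ and let $X^2_m = (1, n/2+1)^L\,(2, n/2+2)^L \cdots (n/2, n)^L$, so phase $i$ alternates $i$ and $i+n/2$. For $OPT$ the working-set bound, realised by a splay tree (a legal rotation-based BST), gives $OPT(X^2_m) = O(n \log n + m)$: the first two accesses of each phase cost $O(\log n)$ while every subsequent within-phase access has working-set size $1$ and so costs $O(1)$. For the lazy-finger lower bound I would apply the main theorem of this paper to rewrite $\min_T R_\ell(T,X^2_m)$ as $L \cdot \Theta\bigl(\min_W \sum_{i=1}^{n/2}\log\bigl(\sum_{k=i}^{i+n/2} w_k / \min(w_i,w_{i+n/2})\bigr)\bigr)$, and argue this is $\Omega(n\log n)$ for every choice of $W$. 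After normalising $\sum_k w_k=1$, AM--GM gives $\min(w_i,w_{i+n/2}) \le \sqrt{w_i w_{i+n/2}}$, so the inner sum is at least $\sum_i \log S_i - \tfrac12 \sum_{j=1}^n \log w_j$; combined with $-\sum_j \log w_j \ge n \log n$ (Lagrange on the probability simplex) and a matching bound on $\sum_i \log S_i$ obtained from the fact that each $S_i$ covers a window of $n/2+1$ consecutive weights, this yields $\Omega(n \log n)$. Hence $\min_T R_\ell(T, X^2_m) = \Omega(m \log n)$.

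The hard part will be the last step, controlling the way the $\log \epsilon$ contributions from the two halves cancel against each other when the weights are concentrated at a few elements; the deeper reason the lower bound is even available is that the pairs $(i, i+n/2)$ are mutually overlapping (non-nested), so they cannot all be realised as tree edges in any single BST, in sharp contrast to the palindromic pairs $(i, n-i+1)$ which \emph{can} all be made parent--child simultaneously via a zig-zag tree (making that natural-looking candidate fail as a separator). Finally, to cover the case of ``any length $m$'' when $m$ is too short for the $n \log n$ overhead in the $OPT$ bound to be absorbed, I would scale the construction down to $k(m) \to \infty$ elements (for instance $k(m) = \Theta(\log m)$), preserving an $\omega(1)$ gap in both separations uniformly in $m$.
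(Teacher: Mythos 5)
Your reductions $\min_T R_r(T,X) = \Omega(\min_T R_\ell(T,X))$ and your $X^1_m$ (the cyclic scan) coincide with the paper's, and are fine. The first genuine gap is your claim that $OPT(X) \le R_\ell(T,X)$ is trivial because ``never rotate, just track the finger's position'' is a legal strategy in the rotation-based model. It is not, in the model this paper (and the standard literature) uses: the paper explicitly defines the rotation model so that the pointer \emph{starts at the root at every operation}. A root-started pointer that never rotates pays $\sum_i d_T(x_i)$, which can exceed the lazy-finger cost by a $\Theta(\lg n)$ factor (e.g.\ on the sequential sequence, $O(1)$ versus $\Theta(\lg n)$ amortized per access). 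The paper closes this step only by invoking the simulation theorem of the cited ICALP work, which shows that any number of persistent lazy fingers can be simulated with constant-factor (amortized) overhead by a rotating tree with a single root-started pointer---a genuinely nontrivial result. Your proof needs that citation, or an explicit and nonstandard change of model.

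For $X^2_m$ you take a genuinely different route from the paper, which uses a randomized construction: rounds in which $k=\lg n$ random elements are each searched once and then searched $n$ times in random order, so a splay tree runs in $O(m\lg\lg n)$ by the working-set bound while any static tree with lazy finger pays $\Omega(m\lg n)$ because consecutive searches are essentially independent uniform elements. Your deterministic crossing-pairs sequence does separate the models (indeed with a stronger $\Theta(\lg n)$ gap than the paper's $\Theta(\lg n/\lg\lg n)$), but your lower-bound sketch is incomplete at exactly the point you flag: after AM--GM and $-\sum_j \lg w_j \ge n\lg n$, you still need a lower bound on $\sum_i \lg S_i$, and the window fact alone yields only $\sum_i \lg S_i \ge (n/2)\lg w_{\lceil n/2\rceil}$, which for small middle weight can cancel the entire $(n/2)\lg n$ gain; quantifying the tension (every interior element of a window is itself an endpoint of another pair, so the windows cannot all be dominated by their own two endpoints) is the whole difficulty, and the weights route essentially re-proves the theorem you are invoking. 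A clean repair makes your non-nesting intuition rigorous directly on the tree: the pairs $(i,i+n/2)$ pairwise cross, so the key intervals spanned by the subtrees rooted at their LCAs pairwise intersect; in a BST two subtrees whose key ranges intersect are nested, hence all these LCAs lie on one ancestor chain. Let $u$ be the deepest such LCA; its subtree spans a key interval containing some $[i,i+n/2]$ and so has $s\ge n/2+1$ nodes. Every node $v$ of that subtree is an endpoint of its own pair, whose LCA is an ancestor of $u$ (or $u$ itself), so that pair's cost absorbs $d_T(v)-d_T(u)$; summing, one pass over all pairs costs at least $\sum_{v} \bigl(d_T(v)-d_T(u)\bigr) = \Omega(s\lg s)=\Omega(n\lg n)$, giving $\Omega(m\lg n)$ overall. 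With that patch, and the ICALP simulation for the $OPT$ step, your argument goes through.
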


\begin{proof}
We address each of the claims of this theorem separately.

\paragraph{Root finger can be simulated with lazy finger: $\min_T R_r(T,X) = \Omega\left( \min_T R_{\ell}(T,X) \right)$.} For lazy finger, moving up to the LCA and back down is not more work than than moving to the root and back down, which is exactly double the cost of the root finger method.

\paragraph{Lazy finger can be simulated with a rotation-based tree:
 $\min_T R_{\ell}(T,X) =\Omega(OPT(X))$.}
 The normal definition of a tree allowing rotations has a finger that starts at the root at every operation and can move around the tree performing rotations.
The work of \cite{icalp} shows how to simulate with constant-factor overhead any number of lazy fingers in a tree that allows rotations in the normal tree with rotations and one single pointer that starts at the root. This transformation can be used on a static tree with lazy finger to get the result.

\paragraph{Some sequences can be executed quickly with lazy finger but not with root finger: There is a $X^1_m$ such that $\min_T R_r(T,X^1_m) = \omega \left( \min_T R_{\ell}(T,X^1_m) \right)$.} One choice of $X^1_m$ is the sequential search sequence $1,2,\ldots n,1,2, \ldots$ repeated until a search sequence of length $m$ is created. So long a $m\geq n$, this takes time $O(m)$ to execute on any tree using lazy finger, but takes $\Omega(m \lg n)$ time to execute on every tree using root finger.

\paragraph{Some sequences can be executed quickly using a BST with rotations, but not with lazy finger.} \label{l:conditionalentropysuck}
Pick some small $k$, say $k=\lg n$. Create the sequence $X^2_m$ in rounds as follows: In each round pick $k$ random elements from $1..n$, search each of them once, and then perform $n$ random searches on these $k$ elements. Continue with more rounds until at total of $m$ searches are performed. A  splay tree can perform this in time $O(m \lg k)$. This is because splay trees have the working-set bound, which states that the amortized time to search an item is at most big-O of the logarithm of the number of different things searched since the last time that item was searched. For the sequence $X^2_m$, the $n$ random searches in each round have been constructed to have a working set bound of $O(\lg k)$ amortized, while the $k$ other searches in each round have a working set bound of $O(\lg n)$ amortized. Thus the total cost to execute $X^2_m$ on a splay tree is 
$ O\left( \frac{m}{n+k}(n \lg k + k \lg n) \right) $
which is $O(m \lg \lg n)$ since $k=\lg n$.

However, for a static tree with lazy finger,  $X^2_m$ is basically indistinguishable from a random sequence and takes $\Omega(m \lg n)$ time.
This is because the majority of the searches are random searches where the previous item was a random search, and in any static tree the expected distance between two random items is $\Omega(\lg n)$.
\end{proof}

\begin{lemma}
A BST in any model cannot reach the conditional entropy. 
\end{lemma}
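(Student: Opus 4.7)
The plan is to exhibit a search sequence $X$ with $H_c(f_X)=0$ whose runtime on every BST is nevertheless $\Omega(m\lg n)$; since the ratio of cost to conditional entropy is then unbounded, no $o(\lg n)$ multiplicative factor between runtime and $H_c$ can hold for all sequences. A natural candidate is the bit-reversal sequence: set $n=2^k$, let $\pi$ be the bit-reversal permutation on $\{0,\dots,n-1\}$, and let $X$ be the cyclic repetition $\pi(0),\pi(1),\dots,\pi(n-1),\pi(0),\dots$ truncated to length $m$.

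First I would compute $H_c(f_X)$. Because $X$ is periodic with a deterministic successor function (each item is always followed by the same next item), $f_X(a,b)=f_X(a)$ when $b$ is the successor of $a$ and $f_X(a,b)=0$ otherwise. The only surviving terms in the $H_c$ sum contribute $\lg(f_X(a)/f_X(a))=\lg 1=0$, so $H_c(f_X)=0$.

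Next I would lower bound the BST cost by invoking Wilber's interleave (first) lower bound on $OPT$, which applies uniformly to every BST algorithm with or without rotations and shows that executing the bit-reversal sequence requires $\Omega(m\lg n)$ time. Since every specific BST (and in particular any static lazy-finger BST, which is weaker than an unrestricted rotation-based BST) has cost at least $OPT$, this same $\Omega(m\lg n)$ lower bound passes through to all models under consideration. Combining with $H_c(f_X)=0$ violates the putative $o(\lg n)$ multiplicative relation: no positive runtime equals the zero bound $c(n)\cdot H_c(f_X)$.

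The main obstacle is importing Wilber's interleave bound cleanly and confirming that it gives an $\Omega(m\lg n)$ (not merely $\Omega(m\lg n/\lg\lg n)$) lower bound for the bit-reversal permutation; it is standard but must be stated carefully so that it covers every BST model the paper has in mind. Once that citation is in place, the rest is the short $H_c$ calculation and a one-line comparison. If a version of the statement with positive $H_c$ were desired, I would inject an occasional uniformly random access between bit-reversal blocks so that $H_c$ scales like a constant per access while Wilber's bound still forces $\Omega(\lg n)$ per access, yielding a ratio of $\Theta(\lg n)$ rather than the degenerate infinite ratio above.
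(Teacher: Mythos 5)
Your proposal is correct and matches the paper's own proof essentially verbatim: the paper also takes the bit-reversal permutation, notes that any single permutation repeated forever has conditional entropy zero (each search determines the next), and invokes Wilber's $\Omega(n\lg n)$ lower bound for executing it on any BST, with or without rotations. Your version is slightly more careful about extending Wilber's bound to the cyclically repeated sequence of length $m$, but this is a routine strengthening and does not change the argument.
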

\begin{proof}
Wilber~\cite{DBLP:journals/siamcomp/Wilber89} proved that the bit reversal sequence is performed in $\Omega(n\lg n)$ time in an optimal dynamic BST. This sequence is a precise permutation of all elements in the tree. 
However, any single permutation repeated over and over has a conditional entropy of 0, since every search is completely determined by the previous one.
\end{proof}

\section{Constructing the optimal lazy finger BST}\label{s:dynprog}

Recall that  $f_{a,b} =f_X({a,b})$ is the number of searches in $X$ where the current search is to $b$ and the previous search is to $a$, and $f_X{(a)}$ is the number of searches to $a$ in $X$.
%If only the $p(a,b)$ values are given the $p(a)$ values can be computed from them using the centralized algorithm for finding the stationary probability in $O(n^3)$ time.
We will first describe one method to compute the cost to execute $X$ on some tree $T$. Suppose the nodes in $[a,b]$ constitute the nodes of some subtree of $T$, call it $T_{a,b}$ and denote the root of the subtree as $r(T_{a,b})$.  We now present a recursive formula for computing the expected cost of a single search in $T$. Let $R_{\ell}(T,X,a,b)$ be the number of edges traversed in $T_{a,b}$ when executing $X$.
Thus, $R_{\ell}(T,X,1,n)$ equals the runtime $R_{\ell}(T,X)$. There is recursive formula for $R_{\ell}(T,X,a,b)$:
$$R_{\ell}(T,X,a,b)=
\begin{cases}
0\text{ if }b<a
\\

\begin{array}{l}

\overbrace{R_{\ell}(T,X,a,r(T_{a,b})-1)}^{(a)}
+
\overbrace{R_{\ell}(T,X,r(T_{a,b})+1,b)}^{(b)}
\\  \;\;\;\;\;
+
\overbrace{2 
\sum_{\mathclap{\substack{i \in [a,r(T_{a,b})-1]\\ j \in [r(T_{a,b})+1,b]}}}
%\sum_{i=a}^{r_{a,b}-1} \sum_{j=r_{a,b}+1}^b 
% \sum_{\mathclap{a\leq i<r_{a,b},\;r_{a,b}<j \leq b}}
 (f_{i,j}+f_{j,i})}^{(c)}
\\  \;\;\;\;\;
+\overbrace{\sum_{i\not = r} (f_{i,r(T_{a,b})}+f_{r(T_{a,b}),i})}^{(d)}
+
\overbrace{\sum_{\mathclap{\substack{i\in[a,b] \\ i \not = r(T_{a,b})\\ j\not \in [a,b]}}} (f_{i,j}+f_{j,i})}^{(e)}
\end{array}
\text{ otherwise}
\end{cases}
$$
The formula is long but straightforward. First we recursively include 
the number of edges traversed in the left (a) and right (b) subtrees of the root $r(T_{a,b})$.
Thus, all that is left to account for is traversing the edges between the root of the subtree and its up to two children. Both edges to its children are traversed when a search moves from the left to right subtree of $r_{a,b}$ or vice-versa (c). A single edge to a child of the $r(T_{a,b})$ traversed if a search moves from either the left or right subtrees of $r(T_{a,b})$ to $r(T_{a,b})$ itself or vice-versa (d), or if a search moves from any node but the root in the current subtree containing the nodes $[a,b]$ out to the rest of $T$ or vice-versa (e).

This formula can easily be adjusted into one to determine the optimal cost over all trees---since at each step the only dependence on the tree was is root of the current subtree, the minimum can be obtained by trying all possible roots.
Here is the resultant recursive formulation for the minimum number of edges traversed in and among all subtrees containing $[a,b]$:
$$\min_T R_{\ell}(T,X,a,b)=
\begin{cases}
0\text{ if }b<a
\\

\min_{r \in [a,b]} \left\{ 
\begin{array}{l}

\min_T R_{\ell}(T,X,a,r-1)
\\ 
+\min_T R_{\ell}(T,X,r+1,b)
\\
+
{ 2 \displaystyle
\sum_{\mathclap{\substack{i \in [a,r-1]\\ j \in [r+1,b]}}}
%\sum_{i=a}^{r_{a,b}-1} \sum_{j=r_{a,b}+1}^b 
% \sum_{\mathclap{a\leq i<r_{a,b},\;r_{a,b}<j \leq b}}
 (f_{i,j}+f_{j,i})}
\\+{ \displaystyle \sum_{i\not = r} (f_{i,r}+f_{r,i})}
\\+
{ \displaystyle \sum_{\mathclap{\substack{i\in[a,b] \\ i \not = r\\ j\not \in [a,b]}}} (f_{i,j}+f_{j,i})}
\end{array}
\right\} \text{ otherwise}
\end{cases}
$$

This formula can trivially be evaluated using dynamic programming in $O(n^5)$ time as there are $O(n^3)$ choices for $a$, $b$, and $r$ and evaluating the summations in the brute-force way takes time $O(n^2)$. The dynamic programming gives not only the cost of the best tree, but the minimum roots chosen at each step gives the tree itself. The runtime can be improved to $O(n^3)$ by observing that when $f$ is viewed as a 2-D array, each of the sums is simply a constant number of partial sum queries on the array $f$, each of which can be answered in $O(1)$ time after $O(n^2)$ preprocessing. (The folklore method of doing this is to store all the 2-D partial sums from the origin; a generic partial sum can be computed from these with a constant number of additions and subtractions).

We summarize this result in the following theorem:
\begin{theorem}
Given the pairwise frequencies $f_X$ finding the tree that minimizes the execution time of search sequence $X$ using lazy finger takes  time $O(n^3)$.
\end{theorem}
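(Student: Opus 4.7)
The plan is a dynamic program over the intervals $[a,b]$ of consecutive keys, since every subtree of a BST on $\{1,\ldots,n\}$ occupies a contiguous range. I will define $C(a,b)$ to be the minimum, over all binary search trees whose key set is exactly $[a,b]$, of the total number of lazy-finger edge traversals ``charged'' to that subtree under the given sequence, so that $C(1,n) = \min_T R_\ell(T,X)$.

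To set up the recurrence, fix a candidate root $r \in [a,b]$. Then the cost splits into the recursive costs $C(a,r-1)$ and $C(r+1,b)$ on the two subtrees, plus a boundary term counting traversals of the two edges incident to $r$ inside the subtree on $[a,b]$. A case analysis on a transition $(u,v)$ shows this boundary term receives a contribution of $2$ when $u$ and $v$ lie in opposite subtrees of $r$; a contribution of $1$ when exactly one of $\{u,v\}$ is $r$ and the other is a non-root node of $[a,b]$; and a contribution of $1$ when exactly one of $\{u,v\}$ is a non-root node of $[a,b]$ and the other lies outside $[a,b]$, because such a path must exit the subtree through $r$. Each contribution can be written as a sum of $f_{i,j}$ values over an axis-aligned rectangle of the $n \times n$ frequency table. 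Minimizing over $r$ and taking the base case $C(a,b)=0$ for $b<a$ completes the recurrence.

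The main obstacle is getting the boundary bookkeeping right: it is easy to double-count the transitions involving $r$ itself together with the transitions that leave $[a,b]$, and easy to forget the factor of $2$ in the opposite-subtree case. Once the recurrence is verified, a direct evaluation by dynamic programming fills $\Theta(n^2)$ cells, each considering $\Theta(n)$ candidate roots, giving $O(n^5)$ time in the naive form where every rectangular sum costs $\Theta(n^2)$.

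The final step is the speedup from $O(n^5)$ to $O(n^3)$. Each boundary sum is a constant number of axis-aligned rectangular sums over the fixed array $f$, so after an $O(n^2)$ preprocessing that stores the standard two-dimensional prefix sums of $f$ from the origin, every rectangle can be evaluated in $O(1)$ by inclusion-exclusion. This drops the per-root work to $O(1)$, yielding total time $O(n^3)$. Recording the minimizing $r$ in each DP cell lets us read off the optimal tree by a top-down traversal of the table.
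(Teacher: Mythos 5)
Your proof is correct and takes essentially the same approach as the paper: the identical interval dynamic program over $[a,b]$ with a minimization over the root $r$, the same three-way accounting of transitions (cost $2$ for crossing between the two subtrees, cost $1$ for root-incident transitions within $[a,b]$, cost $1$ for transitions exiting $[a,b]$ through $r$), and the same reduction from $O(n^5)$ to $O(n^3)$ via $O(n^2)$-preprocessed two-dimensional prefix sums with the tree recovered from the recorded minimizing roots. Your explicit restriction of the root-incident term to partners inside $[a,b]$ matches the paper's prose description of its term (d) and is the correct reading of its somewhat ambiguous notation $\sum_{i \neq r}$.
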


This algorithm computes an optimal tree, and takes time liner in the size of the frequency table $f$. Computing $f$ from $X$ can be done in $O(m)$ time, for a total runtime of $O(m+n^3)$. It remains open if there is any approach to speed up the computation of the optimal tree, or an approximation thereof. 
%Any such approach would have to avoid directly computing the pairwise frequencies $f$. 
Note that although our closed form expression of the asymptotic runtime of the best tree was stated in terms of an optimal choice of weights, the dynamic program presented here in no way attempts to compute these weights. It would be interesting if some weight-based method were to be discovered.

\section{Multiple trees structure} \label{s:multitree}
Here we present a static data structure in the comparison model on a pointer machine that guarantees an average search time of $O(H_c(f_X) \log_d n )$ for any fixed value $1\leq d\leq n$, a runtime which we have shown to be impossible for any BST algorithm, static or dynamic. This data structure requires $O(dn)$ space.  In particular, setting $d=n^\epsilon$ gives a search time of  $O(H_c(f_X))$ with space $O(n^{1+\epsilon})$ for any $\epsilon>0$. 

As a first attempt, a structure could be made of $n$ binary search trees $T_1, T_2, \ldots T_n$ where each tree $T_i$ is an optimal static tree given the previous search was to $i$. By using tree $T_{x_{i-1}}$ to execute search $T_i$, the asymptotic conditional entropy can be easily obtained. However the space of this structure is $O(n^2)$. Thus space can be reduced by observing the nodes not near the root of every tree are being executed slowly and thus need not be stored in every tree.

The \emph{multiple trees structure} has two main parts. It is composed first by a complete binary search tree $T'$ containing all of $S$. Thus the height of $T'$ is $O(\lg n)$. The second part is $n$ binary search trees $\{T_{1},T_{2},\ldots,T_{n}\}$.  A tree $T_{i}$ contains the $d$ elements $j$ that have the greatest frequencies $f_X(i,j)$; these are the $j$ elements most frequently searched after that $i$ has been searched. %Note that an element in $S$ will be present in several trees in the structure. 
The depth of an element $j$ in $T_{i}$ is $O(\lg{f_X(i) \over f_X(i,j)})$. For each element $j$ in the entire structure we add a pointer linking $j$ to the root of $T_j$.
The tree $T'$ uses $O(n)$ space and every tree $T_{j}$ uses $O(d)$ space. Thus the space used by the entire structure is $O(dn)$.

Suppose we have just searched the element $i$ and our finger search is located on the root of $T_i$. Now we proceed to the next search to the element $j$ in the following way: Search $j$ in $T_i$. If $j$ is in $T_i$ then we are done, otherwise search $j$ in $T'$. After we found $j$ either in $T_j$ or $T'$ we move the finger to the root of $T_j$ by following the aforementioned pointer. 

If $j$ is in $T_i$ then it is found in time $O(\lg{f_X(i) \over f_X(i,j)})$. Otherwise if $y$ is found in $T'$, then it is found in $O(\lg n)$ time. We know that if $y$ is not in $T_x$ this means that optimally it requires $\Omega(\lg d)$ comparisons to be found since $T_x$ contains the $d$ elements that have the greatest probability to be searched after that $x$ has been accessed. Hence every search is at most $O(\lg n/\lg d)$ times the optimal search time of $O(\lg{f_X(i) \over f_X(i,j)})$. Thus a search for $x_i$ in $X$ takes time
$O\left( \log_d n  \lg{f_X(x_i) \over f_X(x_{i-1},x_i)} \right).$
Summing this up over all $m$ searches $x_i$ in $X$ gives the runtime to execute $X$:
\begin{align*}
O\left( \sum_{i=1}^m \log_d n  \lg{f_X(x_i) \over f_X(x_{i-1},x_i)} \right)
&=
O\left( \sum_{a=1}^n \sum_{b=1}^n f_X(a,b) \log_d n  \lg{f_X(a) \over f_X(x_{a},x_b)} \right)
\\
&=O( m H_c(f_X) \log_d n)
\end{align*}
We summarize this result in the following theorem:
\begin{theorem}
Given the pairwise frequencies $f_X$ and a constant $d$, $1\leq d \leq n$, the multiple trees structure
executes $X$ in time $O( m H_c(f_X) \log_d n)$ and uses space $O(nd)$.
\end{theorem}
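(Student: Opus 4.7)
The plan is to verify the two bounds separately and then combine them. For the space bound: the global tree $T'$ holds all $n$ elements and uses $O(n)$ space; each auxiliary tree $T_i$ holds exactly $d$ elements and uses $O(d)$ space, so the $n$ auxiliary trees collectively use $O(dn)$; the cross-pointers from each occurrence of $j$ to the root of $T_j$ add only $O(n)$ overhead. Hence the overall space is $O(dn)$.

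For the runtime I first need each $T_i$ to be built so that any element $j$ stored in it has depth $O\!\left(\lg \frac{f_X(i)}{f_X(i,j)}\right)$. This is a standard weighted BST construction (Mehlhorn's heuristic, or Knuth's dynamic program) applied with weights $f_X(i,j)$; since the total weight of the elements stored in $T_i$ is at most $f_X(i)$, normalizing the depth inequality by $f_X(i)$ rather than by the actual total weight in $T_i$ only loosens the bound. With this depth property in hand, the per-search analysis splits into two cases. If the target $x_i$ lies in $T_{x_{i-1}}$, its depth there is $O\!\left(\lg \frac{f_X(x_{i-1})}{f_X(x_{i-1},x_i)}\right)$, so searching $T_{x_{i-1}}$ suffices. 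Otherwise we fall back to $T'$ and pay $O(\lg n)$; crucially, since $x_i$ is not among the $d$ most frequent successors of $x_{i-1}$, we have $f_X(x_{i-1},x_i) \le f_X(x_{i-1})/d$, hence $\lg \frac{f_X(x_{i-1})}{f_X(x_{i-1},x_i)} \ge \lg d$, so $\lg n \le \log_d n \cdot \lg \frac{f_X(x_{i-1})}{f_X(x_{i-1},x_i)}$. In either case the search costs $O\!\left(\log_d n \cdot \lg \frac{f_X(x_{i-1})}{f_X(x_{i-1},x_i)}\right)$, after which following the stored pointer to the root of $T_{x_i}$ in $O(1)$ restores the finger invariant for the next search.

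Summing over $i = 1, \ldots, m$ and regrouping by pair $(a,b) = (x_{i-1}, x_i)$ gives
\[
\sum_{i=1}^m O\!\left(\log_d n \cdot \lg \frac{f_X(x_{i-1})}{f_X(x_{i-1},x_i)}\right) = O\!\left(\log_d n \cdot \sum_{a,b} f_X(a,b) \lg \frac{f_X(a)}{f_X(a,b)}\right) = O(m\,H_c(f_X)\,\log_d n),
\]
using $m\,H_c(f_X) = \sum_{a,b} f_X(a,b) \lg \frac{f_X(a)}{f_X(a,b)}$.

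The hard part will be the fallback case in the per-search analysis: we pay an amount ($\lg n$) that a priori has no relation to the pairwise frequency of the transition, and we must charge it to $\lg \frac{f_X(x_{i-1})}{f_X(x_{i-1},x_i)}$ using only the fact that $T_{x_{i-1}}$ stores the $d$ highest-frequency successors of $x_{i-1}$. This pigeonhole-style step is where the threshold $d$ enters and where the $\log_d n$ factor becomes unavoidable; the rest is routine bookkeeping and a textbook weighted BST construction.
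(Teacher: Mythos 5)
Your proposal is correct and follows essentially the same route as the paper: the same two-case per-search analysis (hit in $T_{x_{i-1}}$ versus fallback to $T'$), the same pigeonhole observation that a miss implies $f_X(x_{i-1},x_i) \le f_X(x_{i-1})/d$ and hence the $O(\lg n)$ fallback cost is at most $\log_d n$ times $\lg \frac{f_X(x_{i-1})}{f_X(x_{i-1},x_i)}$, and the same regrouping of the sum into $m\,H_c(f_X)$. If anything, you make explicit two points the paper leaves terse---the weighted-BST construction guaranteeing the depth bound in each $T_i$, and the exact pigeonhole inequality---and your use of $f_X(x_{i-1})$ in the numerator is the version consistent with the definition of $H_c$ (the paper writes $f_X(x_i)$ there, an apparent typo).
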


We conjecture that no pointer-model structure has space $O(n)$ and search cost $O( H_c(f_X))$.

%\pagebreak

\bibliographystyle{plain}
\bibliography{bib,avl}

\end{document}